\theoremstyle{plain}
\newtheorem{proposition}{Proposition}
\newtheorem{theorem}{Theorem}
\newtheorem{lemma}{Lemma}
\newtheorem{corollary}{Corollary}
\theoremstyle{definition}
\newcommand{\secref}[1]{Section~\ref{#1}}
\newcommand{\thmref}[1]{Theorem~\ref{#1}}
\newcommand{\lemref}[1]{Lemma~\ref{#1}}
\newcommand{\proref}[1]{Proposition~\ref{#1}}
\newcommand{\appref}[1]{Appendix~\ref{#1}}
\newcommand{\ie}{i.e.,\xspace}
\newcommand{\eg}{e.g.,\xspace}
\newcommand{\reals}{\mathbb{R}}
\newcommand{\E}{\mathbb{E}}
\newcommand{\midd}{:}
\newcommand{\rgeqk}{\reals_{\smash\geq}^k}
\newcommand{\rbetak}{\reals_{\smash\beta}^k}
\newcommand{\gfp}{$\text{GFP}$\xspace}
\newcommand{\gsp}{$\text{GSP}$\xspace}
\newcommand{\gfpa}{$\text{GFP}_{\smash\alpha}$\xspace}
\newcommand{\gspa}{$\text{GSP}_{\smash\alpha}$\xspace}
\newcommand{\vcg}{$\text{VCG}$\xspace}
\newcommand{\vcga}{$\text{VCG}_{\smash\alpha}$\xspace}
\begin{document}

\title{Expressiveness and Robustness of First-Price Position Auctions\thanks{We have benefitted greatly from discussions with Jason Hartline, Robert Kleinberg, and \'Eva Tardos.}}

\author{%
	Paul D\"{u}tting\thanks{%
	Department of Computer Science, Cornell University, 136 Hoy Road, Ithaca, NY 14850, USA. Email: \texttt{paul.duetting@cornell.edu}. Research supported by an SNF Postdoctoral Fellowship.}
	\and 
	Felix Fischer\thanks{%
	Statistical Laboratory, University of Cambridge, Wilberforce Road, Cambridge CB3 0WB, UK. Email: \texttt{fischerf@statslab.cam.ac.uk}.}
	\and David C.~Parkes\thanks{%
	School of Engineering and Applied Sciences, Harvard University, 33 Oxford Street, Cambridge, MA 02138, USA. Email: \texttt{parkes@eecs.harvard.edu}.}
}

\date{}

\maketitle


\begin{abstract}
	Since economic mechanisms are often applied to very different instances of the same problem, it is desirable to identify mechanisms that work well in a wide range of circumstances.  We pursue this goal for a position auction setting and specifically seek mechanisms that guarantee good outcomes under both complete and incomplete information.  A variant of the generalized first-price mechanism with multi-dimensional bids turns out to be the only standard mechanism able to achieve this goal, even when types are one-dimensional. The fact that expressiveness beyond the type space is both necessary and sufficient for this kind of robustness provides an interesting counterpoint to previous work on position auctions that has highlighted the benefits of simplicity. From a technical perspective our results are interesting because they establish equilibrium existence for a multi-dimensional bid space, where standard techniques break down.  The structure of the equilibrium bids moreover provides an intuitive explanation for why first-price payments may be able to support equilibria in a wider range of circumstances than second-price payments.
\end{abstract}


\section{Introduction}

	Economic mechanisms are often applied to very different instances of the same problem. It is therefore desirable to find mechanisms that work well in a wide range of circumstances, and specifically do not require any knowledge of agents' preferences on the part of the designer. This goal has been formulated many times and forms the core of the Wilson doctrine~\citep{Wils87a} and of the agenda of robust mechanism design~\citep{BeMo05a}.
	
	We pursue this goal for a position auction setting with one-dimensional types, and specifically seek mechanisms that guarantee good outcomes under both complete and incomplete information. Each of~$k$ positions is to be assigned to exactly one of~$n$ agents, and the value of agent~$i$ for position~$j$ can be written as $\beta_j\cdot v_i$, where $v\in\reals^n$ and $\beta\in\rgeqk=\{x\in\reals^k\midd x_j>0,\text{$x_{j}\geq x_{j'}$ if $j<j'$}\}$. In the complete information case~$v$ is common knowledge among the agents. In the incomplete information case the components of~$v$ are independent and identically distributed according to a continuous distribution with bounded support that is common knowledge among the agents, and~$v_i$ is known to agent~$i$. In both cases,~$\beta$ is common knowledge among the agents.  A prime example of this setting can be found in the context of sponsored search, where agents correspond to advertisers, positions correspond to slots in which advertisements can be displayed,~$\beta_j$ denotes the fraction of cases where an advertisement in position~$j$ leads to a conversion, and~$v_i$ denotes the value agent~$i$ has for a conversion.
	
	The goal of the designer is twofold: to provide the best possible service to the agents by allocating positions in a way that maximizes social welfare, \ie the sum of valuations for the allocated positions; and to maximize revenue subject to this constraint. While the agents agree with the former goal, their interests are diametrically opposed to that of the designer with regard to the latter.  From the point of view of the designer, a good mechanism must therefore guarantee existence of an efficient equilibrium and achieve high revenue in \emph{every} efficient equilibrium.  An appropriate revenue benchmark for efficient equilibria is provided by the truthful equilibrium of the well-known Vickrey-Clarkes-Groves (VCG) mechanism~\citep{KrPe00a}.
	
	We arrive at the following question: 
	\begin{quote}
	Does there exist a single mechanism that possesses an efficient equilibrium under both complete and incomplete information, and recovers the truthful VCG revenue in every efficient equilibrium?	
	\end{quote}
	
	To answer this question we consider the three mechanisms commonly used in for position auctions, the VCG mechanism, the generalized first-price (GFP) mechanism, and the generalized second-price (GSP) mechanism.\footnote{Google and Microsoft use the GSP mechanism, Facebook the VCG mechanism. The GFP mechanism was used by Overture, the first company to provide a successful sponsored search service.}  The variants of these three mechanisms we consider all assign the positions from top to bottom to an agent with maximum bid among those not assigned one of the higher positions.\footnote{We use this greedy allocation rule rather than one that selects an efficient allocation relative to the bids because it simplifies the analysis and thus enables our main positive result. All negative results also hold for the efficient allocation rule, and the two allocation rules obviously agree in any efficient equilibrium.} The VCG mechanism charges each agent the externality it imposes on the other agents, the GFP mechanism charges the agent's bid on the position it is allocated, and the GSP mechanism charges the next-highest bid on that position. For each mechanism we moreover distinguish two variants: an expressive variant in which agent $i$ submits a bid $b_{ij}$ for each position~$j$, and a simplified variant in which agent $i$ specifies a single bid $b_i$ and this bid is multiplied by $\alpha\in\rgeqk$ to obtain bids for the different positions. The vector $\alpha$ is part of the mechanism, so it is common knowledge among the agents and the designer and may or may not be identical to $\beta$.

\paragraph{Our contribution}

	It turns out that most candidate mechanisms are disqualified by prior work. The expressive VCG and GSP mechanisms have an efficient complete information equilibrium with revenue zero for all possible valuations of the agents~\citep{Milg10a}.\footnote{This result requires that the agents can bid arbitrary non-negative numbers. It can be circumvented by forcing the agents to submit non-increasing bids. But then there are still efficient equilibria with revenue arbitrarily smaller than the truthful VCG revenue \citep{DFP11a}.} For the simplified variants of these mechanisms the situation is somewhat better, and this has in fact been used as an argument in favor of simplification~\citep{Milg10a}. However, revenue in an efficient complete-information equilibrium may still be arbitrarily small compared to the truthful VCG revenue~\citep{DFP11a}, and the simplified GSP mechanism may not have an equilibrium at all when information is incomplete~\citep{GoSw09a}. The simplified GFP mechanism, on the other hand, has a unique equilibrium under incomplete information~\citep{GoSw09a} but may not have an equilibrium under complete information~\citep{EOS07a}. 

	This only leaves the expressive GFP mechanism, and we show that it indeed possesses the desired robustness property: an efficient equilibrium under both complete and incomplete information, and the truthful VCG revenue in every efficient equilibrium. While good outcomes under either complete or incomplete information can be obtained with a simplified mechanism, expressiveness thus turns out to be both necessary and sufficient for robustness. This provides an interesting counterpoint to previous work on position auctions that has highlighted the benefits of simplicity~\citep{Milg10a,DFP11a}.  An additional advantage of the expressive GFP mechanism is that it is independent of $\beta$. It can therefore be used in settings where the designer is uncertain about the exact value of $\beta$, and our results extend to such settings.

	Our analysis of the complete information case is similar to the classic analysis of \citet{BeWh86} that links equilibria of first-price auctions to the core, and to more recent approaches that also make this connection~\citep{DaMi08a,HJW13a}.  The common feature is the use of what \citet{Milg04a} has called target-profit strategies. Specifically, we show that having each agent $i$ bid its value $\beta_j \cdot v_i$ for position $j$ minus its truthful VCG utility $u_i$, or zero if this is negative, yields an efficient equilibrium. Notable differences concern our use of a greedy rather than efficient allocation rule, and the fact that we show revenue in every efficient equilibrium to be at least the truthful VCG revenue. Unlike prior work we also explicitly handle ties in choosing an allocation.

	As types are one-dimensional, our incomplete information analysis can use \citeauthor{Myer81}'s classic characterization result~\citep{Myer81} to identify equilibrium candidates. The standard technique to verify that a particular candidate is an equilibrium involves integrating the derivative of an agent's utility, as a function of both valuation and bid, along a path between two bids. This technique breaks down in our setting, where the bid space has higher dimension than the valuation space and the utility function may not be defined everywhere on the path. We overcome these difficulties by performing an induction from the last position to the first, and showing that the conjectured equilibrium bid on position $j$ is optimal for agent $i$ given that the other agents bid according to the conjectured equilibrium, and agent $i$ bids according to the conjectured equilibrium on positions $j+1$ to $k$.  We believe that similar techniques can be used to show equilibrium existence in more general settings, including settings with multi-dimensional types.

	Each step of the induction considers only one dimension of the bid space and can use the standard technique, but identifying the equilibrium bids and deriving the utility function is a non-trivial task.  \citeauthor{Myer81}'s theorem only provides a necessary condition for bids that lead to an efficient equilibrium, namely that payments in expectation equal the truthful VCG payments. Since the bid space is multi-dimensional, many different bids satisfy this condition. In the eventual equilibrium, the bid of agent $i$ on position $j$ equals its expected truthful VCG payment conditioned on being allocated position $j$.   These bids again have a natural interpretation in terms of target-profit strategies and also provide an intuitive explanation for why first-price payment rules may be able to support equilibria in a wider range than second-price payment rules: the expected truthful VCG payment of an agent subject to allocation of a given position depends on the agent's valuation and on the distribution from which the valuations of the other agents are drawn, which is exactly the information available to the agent.

\paragraph{Related work}

	The design of more expressive mechanisms for specific applications is an important topic of contemporary mechanism design~\citep[\eg][]{AMPP09a,GhSa10a,CRHP11a,DHW11a,DLN12a,GML12a,GML13a}. In addition, it has been argued more abstractly that the expressiveness of a mechanism is positively correlated with the quality of the outcomes it is able to support. \citet{BSS08a} showed that for combinatorial auctions, the maximum social welfare over all outcomes of a mechanism strictly increases with expressiveness, for a particular measure of expressiveness based on notions from computational learning theory. Implicit in this result is the intuition that more expressiveness is generally desirable, as it allows a mechanism to achieve a more efficient outcome in more instances of the problem.

	The classic analysis of position auctions is due to \citet{Vari07a} and \citet{EOS07a}. Follow-up work has emphasized the benefits of simplicity in this context. \citet{Milg10a} and \citet{DFP11a} considered a complete information setting and showed that simplification can eliminate zero-revenue equilibria without introducing new, and potentially undesirable, equilibria. The authors also pointed out certain advantages of the GSP mechanism over the VCG mechanism in this regard. \citet{GoSw09a} and \citet{ChHa13a} showed that under complete information the GSP mechanism may fail to have an efficient equilibrium, whereas the GFP mechanism always possesses a unique equilibrium, which is efficient and yields the truthful VCG revenue. \citet{PaTa10a}, \citet{LuPa11}, \citet{CKKK11}, and \citet{SyTa13a} showed that the GSP mechanism has a small constant price of anarchy under both complete and incomplete information.\footnote{The price of anarchy compares the minimum social welfare in any equilibrium to the maximum social welfare of any outcome. That greedy algorithms can achieve a small price of anarchy, potentially smaller than that of an efficiently computable outcome, was previously highlighted by \citet{Gair09} in the context of covering games.}   \citet{LPT12a} established lower bounds on the revenue of the GSP mechanism: for complete information it can be arbitrarily small compared to the truthful VCG revenue, for incomplete information it always is a constant fraction of the latter. 

	Our work also has connections to the literature on non-parametric Bayes-Nash implementation, robust full implementation, and prior-free approximation. Non-parametric Bayes-Nash implementation shows that an uninformed designer can implement essentially the same outcomes in equilibrium as an informed designer~\citep[\eg][]{MaSj02}. Robust full implementation seeks to obtain mechanisms that implement a desired outcome in every equilibrium and for any prior the agents may have~\citep[\eg][]{BeMo09a}. Prior-free approximation seeks to approximate a desired outcome for any prior~\citep[\eg][]{DRY10a}.

	To the best of our knowledge, the study of mechanisms for position auctions that admit efficient equilibria and yield high revenue in every efficient equilibrium under both complete and incomplete information, and the use of additional expressiveness to achieve this goal, are both novel.


\section{Preliminaries}

	We study a setting with a set $\{1,\dots,k\}$ of \emph{positions} ordered by quality and a set $N=\{1,\dots,n\}$ of \emph{agents} with \emph{unit demand} and \emph{one-dimensional valuations} for the positions. More formally, write $\rgeqk=\{x\in\reals^k\midd x_j>0,\text{$x_{j}\geq x_{j'}$ if $j<j'$}\}$ for the set of $k$-dimensional vectors whose entries are positive and non-increasing. For $\beta\in\rgeqk$, let $\rbetak=\{x\in\reals^k\midd x=\beta v, v\in\reals_{+}\}$ be the one-dimensional subspace of $\reals^n$ spanned by $\beta\in\rgeqk$. Agent $i$'s valuation can then be represented by a vector $\beta v_i\in\rbetak$ in this subspace, such that $\beta_j v_i\geq 0$ is the agent's value for position $j$. Our goal is to assign the positions to agents in order to maximize total value. We refer to an assignment of agents to positions that achieve this as \emph{efficient}.  Because the base $\beta$ of the subspace $\rbetak$ is the same for all agents, this can be achieved by allocating positions in decreasing order of $v_i$. We assume that $\beta$ is common knowledge among the agents. 

	We compare two kinds of auctions.  An {\em expressive auction} solicits a vector $b_i\in\rgeqk$ of bids from each agent $i\in N$, where $b_{i,j}$ is interpreted as agent $i$'s bid on slot $j$.  A {\em simplified auction}\footnote{We refer to these mechanism as simplified as they can be viewed as resulting from the expressive mechanism by restricting the $k$-dimensional bid space to a $1$-dimensional subspace.} is parameterized by vector $\alpha\in\rgeqk$ and solicits a single-dimensional bid $b_i\in \reals_{+}$ from each agent $i\in N$.  The single-dimensional bid is extended to a $k$-dimensional bid by multiplying it with $\alpha$. Agent $i$'s bid on slot $j$ is thus $\alpha_j v_i.$ 

	More specifically, we consider simplified and expressive variants of the generalized first-price (GFP), generalized second-price (GSP), and the Vickrey-Clarke-Groves (VCG) auctions. We denote these auctions by \gfpa, \gspa, and \vcga and by \gfp, \gsp, and \vcg. We assume that all three mechanism assign the items greedily. That is, starting from the first position and proceeding to the last position, they assign the current position to the agent with the highest bid that has not yet received a position. We focus on greedy winner determination algorithms because it simplifies the equilibrium analysis, and also because it is consistent with the current practice in sponsored search. The payment rules are defined identically for the simplified and expressive variants of each auction. In the generalized first-price auctions, the payment of agent $i$ assigned position $j$ is equal to the bid value associated with position $j$. In the generalized second-price auctions, the payment of agent $i$ for position $j$ is equal to the next-lower bid for that position. In the \vcg auctions, agent $i$ assigned position $j$ is charged an amount equal to the total loss in value of all other agents, according to their bids, caused by assigning position $j$ to agent $i$.

	We make the usual assumption of quasi-linear utilities, such that the utility $u_i(b,v_i)$ of agent $i$ with value $v_i$, in a given auction and for a given bid profile $b$, is equal to its valuation for the position it is assigned minus its payment for that position. To be able to reason about the strategic behavior of agents we need to specify what the agents know about each others' valuations.  In the {\em complete information} setting values $v_i$ are common knowledge among the agents. A vector of bids $(b_1,\dots,b_n)$ is a \emph{Nash equilibrium} of a given mechanism if no agent has an incentive to change its bid assuming that the other agents don't change their bids, \ie if for every $i \in N$ and every~$b'_i$, 
\[
	u_i((b_1,\dots,b_i,\dots,b_n),v_i) \ge u_i((b_1,\dots,b'_i,\dots,b_n),v_n).
\]
In {\em incomplete information} environments, values $v_i$ are drawn independently from a distribution $F$ supported on $[0,\bar{v}]$ for some finite $\bar{v}\in\reals_+$. Distribution $F$ is assumed to be common knowledge among the agents. In this setting, a vector $(b_1,\dots,b_n)$ of bidding functions is a \emph{Bayes-Nash equilibrium} of a given auction if no agent has an incentive to change its bidding function assuming that the other agents don't change their bidding functions and values of the other agents are drawn from $F$, \ie if for every $i\in N$, every $v_i\in[0,\bar{v}]$, and every bidding function $b'_i$,
\begin{multline*}
	\E_{v_j\sim F,j\neq i} \Bigl[ u_i\bigl((b_{1}(v_{1}),\dots,b_{i-1}(v_{i-1}),b_{i}(v_{i}),b_{i+1}(v_{i+1}),\dots,b_{n}(v_{n})),v_i\bigr) \Bigr] \geq \\
	\E_{v_j\sim F,j\neq i} \Bigl[ u_i\bigl((b_{1}(v_{1}),\dots,b_{i-1}(v_{i-1}),b'_{i}(v_{i}),b_{i+1}(v_{i+1}),\dots,b_{n}(v_{n})),v_i\bigr) \Bigr] .
\end{multline*}

	Because our environment is one-dimensional we can appeal to \citeauthor{Myer81}'s characterization of the expected payments in a Bayes-Nash equilibrium.
\begin{theorem}[\citet{Myer81}]\label{thm:myerson}
	Consider a position auction, and assume that agents use bidding functions such that agent $i$ with valuation $v_i$ is assigned position $s$ with probability $P_{s}^i(v_i)$. Then the bidding functions are a Bayes-Nash equilibrium of the auction only if, for every agent $i$,
\begin{enumerate}
\item the expected allocation $\sum_{s=1}^{k} P_s^i(v_i) \beta_s$ is non-decreasing in $v_i$
\item the expected payment is
\[
	p_i(v_i) = \sum_{s=1}^{k} P^i_s(v_i) \beta_s v_i - \int_{0}^{v_i} \sum_{s=1}^{k} P^i_s(z) \beta_s \,dz,
\]
where $p_i(0)=0$.
\end{enumerate}
\end{theorem}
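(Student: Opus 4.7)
The plan is to follow Myerson's classical argument, based on the incentive-compatibility inequalities implied by Bayes-Nash equilibrium together with an envelope formula for a convex utility function. Let $Q^i(v) = \sum_{s=1}^{k} P_s^i(v)\,\beta_s$ denote the expected weighted allocation of agent $i$ at value $v$, and let $U_i(v) = v\,Q^i(v) - p_i(v)$ denote its expected equilibrium utility. In equilibrium, agent $i$ with true value $v$ can deviate and submit $b_i(v')$ instead of $b_i(v)$; because the other agents' strategies are fixed, this alternative bid yields expected allocation $Q^i(v')$ and expected payment $p_i(v')$, so the Bayes-Nash condition reads
\[
U_i(v) \;\geq\; v\,Q^i(v') - p_i(v') \qquad\text{for all } v, v' \in [0,\bar v].
\]

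For part~1, I would apply this inequality once as stated and once with the roles of $v$ and $v'$ swapped. Adding the two inequalities cancels the payment terms and leaves $(v-v')\bigl(Q^i(v) - Q^i(v')\bigr) \geq 0$, which is exactly monotonicity of $Q^i$ and hence of the expected allocation $\sum_s P_s^i(v)\beta_s$. For part~2, I would rewrite $U_i(v) = \sup_{v'}\bigl[v\,Q^i(v') - p_i(v')\bigr]$, exhibiting $U_i$ as a pointwise supremum of affine functions of $v$ and hence as a convex function on $[0,\bar v]$. Convexity on a compact interval implies absolute continuity, so $U_i$ is differentiable almost everywhere and equals the integral of its derivative; the envelope theorem identifies $U_i'(v) = Q^i(v)$ wherever the derivative exists, since the supremum defining $U_i(v)$ is attained at $v' = v$. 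Integrating and using the boundary condition $U_i(0) = -p_i(0) = 0$ yields $U_i(v_i) = \int_0^{v_i} Q^i(z)\,dz$, and substituting back the identity $U_i(v_i) = v_i\,Q^i(v_i) - p_i(v_i)$ and solving for $p_i(v_i)$ produces the claimed formula.

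The only technical delicacy is justifying that $U_i$ equals the integral of $Q^i$ even though the derivative $U_i'$ is a priori only defined almost everywhere; absolute continuity of convex functions on compact intervals takes care of this, so no deeper obstacle arises. Because the type space here is one-dimensional, the single-variable envelope argument applies verbatim, and no adaptation to the multi-dimensional bid space is required for this necessary-condition direction---the multi-dimensional difficulties alluded to in the introduction will only appear later, when verifying that a specific candidate profile is in fact an equilibrium.
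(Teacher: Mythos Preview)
The paper does not actually prove this theorem: it is stated with attribution to \citet{Myer81} and used as a black box, so there is no ``paper's own proof'' to compare against. Your proposal is the standard Myerson argument and is correct. The two-inequality swap for monotonicity and the convexity/envelope route for the payment identity are exactly the classical steps, and your remark that absolute continuity of a convex function on a compact interval justifies the integral representation is the right way to handle the almost-everywhere differentiability issue. One small caveat: the condition $p_i(0)=0$ is not a consequence of the Bayes-Nash inequalities alone but requires an additional normalization (individual rationality at the lowest type together with the mechanism not subsidizing bidders); you invoke it as a boundary condition, which matches how the paper phrases the statement, so this is fine.
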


	Since an efficient allocation satisfies monotonicity, we have the following corollary.
\begin{corollary} \label{cor:equiv}
	In an efficient Bayes-Nash equilibrium of any position auction, the expected payment of every agent $i$ is equal, for every value $v_i$, to the expected payment of the agent in the truthful equilibrium of the expressive \vcg auction.
\end{corollary}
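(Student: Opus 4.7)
The corollary is a direct application of Myerson's characterization in \thmref{thm:myerson} to two distinct equilibria: the given efficient Bayes--Nash equilibrium and the truthful equilibrium of the expressive \vcg auction. My plan is to observe that the expected payment formula in part~2 of \thmref{thm:myerson} depends only on the assignment probabilities $P_s^i(v_i)$ and on the boundary condition $p_i(0)=0$, so it suffices to match these two ingredients across the two equilibria.

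First, I would note that because both equilibria are efficient---the one by hypothesis, and the truthful \vcg equilibrium because truthful bidding is dominant in \vcg and the greedy rule applied to true values produces the welfare-maximizing assignment (recall that $\beta\in\rgeqk$ is common, so efficiency is equivalent to ranking by $v_i$)---the assignment probability $P_s^i(v_i)$ in either case is the same function: namely the probability, over $v_j\sim F$ for $j\neq i$, that $v_i$ is the $s$-th largest among $v_1,\dots,v_n$. This function depends on $s$, $v_i$, $n$, and $F$, but not on the mechanism used, and satisfies monotonicity so that \thmref{thm:myerson} indeed applies.

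Second, I would apply \thmref{thm:myerson} to each equilibrium to obtain
\[
   p_i(v_i) \;=\; \sum_{s=1}^{k} P_s^i(v_i)\,\beta_s v_i \;-\; \int_{0}^{v_i} \sum_{s=1}^{k} P_s^i(z)\,\beta_s\,dz,
\]
with $p_i(0)=0$. Since the right-hand side is literally the same expression in both cases, the expected payment of agent $i$ at valuation $v_i$ in the efficient Bayes--Nash equilibrium coincides with that in the truthful equilibrium of the expressive \vcg auction, which is the claim.

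There is no real obstacle here; the only point that warrants attention is the boundary condition $p_i(0)=0$. For the truthful \vcg equilibrium this is immediate, since an agent who bids zero imposes zero externality on the others regardless of the greedy tie-break, and hence pays zero. For the other efficient equilibrium, the boundary condition is built into the statement of \thmref{thm:myerson} and is standard from individual rationality together with the fact that an agent with valuation zero derives zero value from any allocation and can secure nonnegative utility. With the boundary condition in hand on both sides, pointwise equality of the expected payments follows.
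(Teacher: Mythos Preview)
Your proposal is correct and follows the same approach as the paper, which simply records that an efficient allocation satisfies the monotonicity condition of \thmref{thm:myerson} and then states the corollary without further argument. You have merely spelled out the details: that efficiency pins down the allocation probabilities $P_s^i(v_i)$ independently of the mechanism, so the payment identity in part~2 of \thmref{thm:myerson} yields the same expression on both sides; your remarks on the boundary condition are a reasonable elaboration but are already absorbed into the paper's statement of \thmref{thm:myerson}.
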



\section{Complete Information Analysis}\label{app:complete}
	We begin our analysis by reviewing the properties of the expressive \gfp mechanism in settings with complete information. We show that expressive \gfp always has a Nash equilibrium, that all its Nash equilibria are efficient, and that payments in every Nash equilibrium are at least the truthful VCG payments. The proof is given in \appref{app:complete-information}.

\begin{proposition}\label{pro:complete-information}
Assume that valuations are taken from $\rbetak$. Then, 
\begin{enumerate}
\item the expressive \gfp mechanism has an efficient Nash equilibrium with payments equal to the truthful VCG payments, 
\item every Nash equilibrium of the expressive \gfp mechanism is efficient, and
\item the payments in every Nash equilibrium of the expressive \gfp mechanism are at least the truthful VCG payments.
\end{enumerate}
\end{proposition}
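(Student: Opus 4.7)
The three parts are proved via target-profit strategies for the constructive part and via pairwise deviation arguments for parts (2) and (3), with greedy winner determination and arbitrary tie-breaking being the chief source of technical care.

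For \textbf{part (1)}, relabel agents so that $v_1 \ge \dots \ge v_n$, let $u_i$ denote agent $i$'s utility in the truthful equilibrium of the expressive \vcg mechanism (using the convention $v_{k+1}:=0$ when $n\le k$), and consider the target-profit bids $b^*_{ij} = \max\{0, \beta_j v_i - u_i\}$, following \citet{Milg04a}. First, I would verify that under $b^*$ the greedy rule assigns agent $i$ to position $i$ with payment $b^*_{i,i} = \beta_i v_i - u_i$, which is exactly agent $i$'s truthful VCG payment: the required inequality $b^*_{j,j} \ge b^*_{i,j}$ for $j<i$ reduces to $u_j - u_i \le \beta_j(v_j - v_i)$, and this follows from the identity $u_j - u_i = \beta_j v_j - \beta_i v_i - \sum_{s=j}^{i-1}(\beta_s-\beta_{s+1}) v_{s+1}$ together with $v_{s+1} \ge v_i$ for $s\le i-1$. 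Second, the defining feature of target-profit bids is that if agent $i$ deviates and the greedy rule awards it some position $j'$, the highest competing bid on $j'$ has the form $b^*_{\ell,j'} = \beta_{j'} v_\ell - u_\ell$ for some agent $\ell$, so agent $i$'s deviation payoff is at most $\beta_{j'}(v_i - v_\ell) + u_\ell$; the same type of inequality (applied with the appropriate choice of indices, plus a short case split on whether the deviation is to a higher or lower position than $i$'s current one) bounds this by $u_i$.

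For \textbf{part (2)}, suppose for contradiction that some NE $b$ has an inefficient allocation $\pi$, witnessed by $v_i > v_j$ and $\pi(j) < \pi(i)$. The no-deviation condition for agent $i$ raising $b_{i,\pi(j)}$ to $b_{j,\pi(j)}+\varepsilon$---a deviation that actually wins position $\pi(j)$ under greedy, since bids on positions $1,\dots,\pi(j)-1$ are unchanged---gives, as $\varepsilon\to 0$,
\[
	(\beta_{\pi(j)} - \beta_{\pi(i)})\, v_i \;\le\; b_{j,\pi(j)} - b_{i,\pi(i)} .
\]
The analogous condition for agent $j$ shifting bid weight from $\pi(j)$ to $\pi(i)$ (realized by lowering $j$'s bids on all positions $\ne \pi(i)$ and raising the bid on $\pi(i)$ just above $b_{i,\pi(i)}$; the resulting cascade of the greedy rule leaves $j$ as the maximum bidder on $\pi(i)$ because every other remaining agent's bid there was dominated by $b_{i,\pi(i)}$ in the original NE) gives the reverse inequality with $v_i$ replaced by $v_j$, and combining forces $v_i \le v_j$, a contradiction. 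For \textbf{part (3)}, let agent $i$ hold position $i$ in an efficient NE and show by downward induction on $i$ that
\[
	b_{i,i} \;\ge\; \sum_{s=i}^{k-1}(\beta_s - \beta_{s+1})\, v_{s+1} + \beta_k v_{k+1} ,
\]
which is exactly agent $i$'s truthful VCG payment. The inductive step considers agent $i+1$ deviating by raising only $b_{i+1,i}$ to $b_{i,i}+\varepsilon$: bids on positions $1,\dots,i-1$ are unchanged, so agents $1,\dots,i-1$ still occupy those positions, and agent $i+1$ then takes position $i$. The no-deviation condition $\beta_i v_{i+1} - b_{i,i} \le \beta_{i+1} v_{i+1} - b_{i+1,i+1}$ combined with the inductive hypothesis at $i+1$ completes the step; the base case $b_{k,k} \ge \beta_k v_{k+1}$ uses the analogous deviation by agent $k+1$ (or is vacuous when $n\le k$).

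\textbf{The main obstacle} is realizing these deviations under greedy winner determination with arbitrary tie-breaking. A naive $\varepsilon$-deviation can be defeated by an unfavorable tie-break, and the $j$-deviation in part (2) requires tracing the cascade through the greedy algorithm after several of $j$'s bid coordinates are perturbed simultaneously. The authors' remark about explicitly handling ties suggests that this bookkeeping is the technical heart of the proof, even though the underlying economics---target-profit strategies and pairwise exchange inequalities---is classical.
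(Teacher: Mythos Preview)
Your plan is essentially the paper's: target-profit bids $b_{i,j}=\max(\beta_j v_i-u_i,0)$ for Part~1 and pairwise deviation inequalities for Parts~2 and~3. One technical correction is needed in how you realize the deviations. Bids must lie in $\rgeqk$ and hence be non-increasing across positions, so in Part~2 agent~$i$ cannot raise only $b_{i,\pi(j)}$ if this would exceed $b_{i,\pi(j)-1}$, and similarly in Part~3; the single-coordinate deviations you describe may simply not be feasible bids. The paper's fix is to have the deviating agent bid the target amount \emph{flatly on all positions up to and including the target} (e.g., agent~$i$ bids $b_{j,t}+\epsilon$ on positions $1,\dots,t$), accept that the agent may then win some position $\le t$ rather than exactly $t$, and use that the resulting value is at least $\beta_t v_i$ while the payment is still $b_{j,t}+\epsilon$; this yields the same inequality you wrote down. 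The same device handles agent~$j$'s downward deviation in Part~2 (bid $b_{i,s}+\epsilon$ on positions $1,\dots,s$ and essentially zero below, then argue---this is the ``cascade'' you mention---that $j$ lands at position $s$ or higher) and the step in Part~3. For Part~1 the paper also fixes a specific tie-breaking rule (each agent points to its intended position and ties go to the pointing agent), which is needed because your inequality $b^*_{j,j}\ge b^*_{i,j}$ can hold with equality; your remark about ``arbitrary tie-breaking'' should be replaced by this choice.
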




\section{Incomplete Information Analysis}\label{sec:expressive}

	Next we consider environments with incomplete information and show our main result, that the expressive \gfp mechanism always has an efficient equilibrium that yields the truthful VCG revenue.
\begin{theorem} \label{thm:expressive}
	Assume that valuations are drawn independently from a continuous distribution on $\rbetak$ with bounded support. Then the expressive \gfp mechanism has an efficient Bayes-Nash equilibrium with the same payments as the truthful equilibrium of the VCG auction.
\end{theorem}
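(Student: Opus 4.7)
The plan is to construct an explicit symmetric equilibrium $b^*$ and verify it by backward induction on positions. Motivated by \corref{cor:equiv}, which forces the interim expected payment in any efficient Bayes-Nash equilibrium to coincide with the interim expected truthful VCG payment, I would set
\[
b^*_j(v_i) \;=\; \E\bigl[\,\text{truthful VCG payment on position }j \,\bigm|\, v_i \text{ is the $j$-th largest of the $n$ values}\,\bigr],
\]
which can be written explicitly as the conditional expectation of $\beta_k v_{(k)} + \sum_{s=j}^{k-1}(\beta_s - \beta_{s+1}) v_{(s)}$ under this event, where $v_{(1)} \geq \cdots \geq v_{(n-1)}$ denote the order statistics of the other agents' values. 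A short monotonicity check shows that $b^*_j$ is strictly increasing in $v_i$, so the greedy rule applied to the symmetric profile $b^*$ ranks agents by value and produces the efficient allocation; by construction the interim expected payment then matches truthful VCG position by position, which handles the revenue half of the conclusion.

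The real work is to show that $b^*$ is a Bayes-Nash equilibrium. Because the bid space is $k$-dimensional while types are one-dimensional, the usual trick of line-integrating the derivative of interim utility across bids cannot be applied directly. Instead I would proceed by backward induction on $j\in\{k,k-1,\dots,1\}$, establishing the following statement: assuming the other agents bid according to $b^*$ and agent $i$'s bids on positions $j+1,\dots,k$ are fixed at $b^*_{j+1}(v_i),\dots,b^*_k(v_i)$, the bid $b^*_j(v_i)$ on position $j$ is optimal for agent $i$ independently of the bids $b_{i,1},\dots,b_{i,j-1}$ submitted on earlier positions. Given this for every $j$, the theorem follows by a telescoping improvement argument: starting from an arbitrary deviation $(b_{i,1},\dots,b_{i,k})$, one replaces $b_{i,k}$ by $b^*_k(v_i)$ (weak improvement by the step-$k$ statement), then $b_{i,k-1}$ by $b^*_{k-1}(v_i)$, and so on, arriving at $b^*(v_i)$ with utility no smaller than that of the starting deviation.

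The induction step is where the main difficulty sits. The key is to reparameterize agent $i$'s bid on position $j$ by the pretended threshold $x=(b^*_j)^{-1}(b_{i,j})$ and to exploit the following structural feature of the greedy rule: once $b_{i,j+1},\dots,b_{i,k}$ are fixed at the equilibrium values, the conditions under which agent $i$ eventually wins some later position $s\geq j+2$ force $v_{(j+1)}\geq v_i$, which automatically implies $v_{(j)}\geq v_i\geq x$ for every $x\leq v_i$. Consequently, for $x\leq v_i$ only two terms of the interim utility move with $x$, namely the contributions from winning position $j$ and winning position $j+1$, and the optimization over $x$ reduces to a one-dimensional first-price best-response problem. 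A direct first-order-condition calculation, using the recursive definition of $b^*_{j+1}$ supplied by the induction hypothesis, then shows that the unique critical point in $[0,v_i]$ is $x=v_i$. The case $x>v_i$ is handled separately: overbidding on position $j$ can only shrink the probability of reaching any later position and one checks that this loss dominates any gain obtained on position $j$ itself.

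The most delicate part of carrying out the plan will be the bookkeeping of the conditional distributions of order statistics given the event that agent $i$ fails to win positions $1,\dots,j-1$ under the (arbitrary) bids $b_{i,1},\dots,b_{i,j-1}$, and extracting a clean first-order condition from the implicit definition of $b^*_j$. The payoff of the reparameterization, though, is that each induction step collapses to a standard envelope calculation, and the telescoping argument then promotes these scalar optimizations to a full multi-dimensional equilibrium verification.
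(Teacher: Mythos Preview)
Your proposal shares the paper's architecture: the same bidding function $b^*_j(v)$ (the conditional truthful VCG payment on position $j$), the same backward induction over positions, and the same telescoping argument to pass from per-position optimality to a full Bayes-Nash equilibrium. Your probabilistic observation for the underbidding case is a genuine improvement over the paper's treatment. You note that, conditional on agent $i$ not winning positions $1,\dots,j-1$, those positions necessarily go to the other agents with the $j-1$ largest values; hence if agent $i$ eventually wins some position $s\geq j+2$ while bidding truthfully on $j+1,\dots,k$, then $v_{(j)}\geq v_{(j+1)}>v_i\geq x$, so the constraint ``lose position $j$'' is automatically satisfied for every $x\leq v_i$. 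This shows $P_{s,n-1}$ is \emph{constant} in $x_j$ on $[0,v_i]$ for $s\geq j+2$, whereas the paper's Lemmas~\ref{lem:jplus2andbelow}--\ref{lem:auxiliary-2} only establish that the derivative vanishes at the single point $x_j=v$.

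That said, the proposal has two real gaps precisely where the paper does its heavy lifting. First, the sentence ``a direct first-order-condition calculation, using the recursive definition of $b^*_{j+1}$ supplied by the induction hypothesis, shows that the unique critical point in $[0,v_i]$ is $x=v_i$'' does not hold up. The induction hypothesis is an \emph{optimality} statement at step $j{+}1$; it relates $b^*_{j+1}$ to $b^*_{j+2}$, not to $b^*_j$, and it does not hand you the step-$j$ first-order condition. Verifying that the derivative of $T_j+T_{j+1}$ vanishes at $x=v_i$ is exactly the content of the paper's \lemref{lem:jandjplus1}, which reduces to the differential identity
\[
\frac{d}{dv}\,b^*_j(v)\;=\;(n-j)\,\frac{f(v)}{F(v)}\Bigl[\bigl(\beta_j v - b^*_j(v)\bigr)-\bigl(\beta_{j+1} v - b^*_{j+1}(v)\bigr)\Bigr]
\]
and is checked by an explicit computation from the conditional-VCG formula for $b^*_j$. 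Moreover, a first-order condition at $x=v_i$ does not by itself give uniqueness of the critical point or even that $v_i$ is a maximizer; the paper closes this with a single-crossing argument (\lemref{lem:part-b}), showing $\frac{d}{dv}\frac{d}{dx_j}u^*\geq 0$ via an ex-post monotonicity of the allocated quality in $x_j$. Your plan offers no substitute for either piece.

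Second, the overbidding case $x>v_i$ is exactly where your structural shortcut fails: once $x>v_i$ it is possible that $v_i<v_{(j)}<x$, so ``lose position $j$'' is no longer implied by ``lose position $j{+}1$'', and $P_{s,n-1}$ for $s\geq j+2$ does move with $x_j$. The sketched claim that ``the loss from shrinking the probability of reaching later positions dominates any gain on position $j$'' is not obvious and is not proved. The paper avoids splitting into cases altogether: Lemmas~\ref{lem:part-a} and~\ref{lem:part-b} together yield $\frac{d}{dx_j}u^*((x_1,\dots,x_j,v,\dots,v),v)\geq 0$ for $x_j<v$ and $\leq 0$ for $x_j>v$, by comparing to the value of the derivative at the diagonal, which is zero. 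If you want to keep your cleaner $x\leq v_i$ argument, you will still need an analogue of the paper's cross-derivative lemma (or a direct domination argument) to handle $x>v_i$.
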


	We prove this result by constructing a bidding function $b^*:\reals\rightarrow\rgeqk$ and showing by induction that an agent maximizes its utility by bidding according to $b^*$ assuming that all other agents bid according to $b^*$ as well. To this end, we define in \secref{sec:bidding} a function $b_j^*:\reals\rightarrow\reals$ for each position $j$ that maps a valuation~$v$ to the expected truthful VCG payment $b_j^*(v)$ an agent with valuation $v$ would face if it was allocated position~$j$. The equilibrium bidding function $b^*$ will then be given by $b^*(v)=(b^*_1(v),\dots,b^*_k(v))$. We will say that an agent with valuation~$v$ bids truthfully on position $j$ (according to $b^*_j$) if he bids $b_j^*(v)$, and that he bids truthfully if he bids truthfully on all positions. The property we show by induction is that independently of the bids on positions $1,\dots,j-1$ and assuming truthful bids on positions $j+1,\dots,k$, it is optimal to bid truthfully on position $j$. For this we apply the usual technique and integrate the derivative of the utility function from the truthful bid on position~$j$ to a conjectured beneficial deviation on position~$j$ to derive a contradiction.

	Denote by $u^*((x_1,\dots,x_k),v)$ the expected utility of an agent with valuation $v$ who bids $b^*_j(x_j)$ on position $j$ while all other agents bid truthfully.  The proof of \thmref{thm:expressive} uses the following lemmata, which we prove in Sections~\ref{sec:lemma1} and \ref{sec:lemma2}.
\begin{lemma} \label{lem:part-a}
	Fix a particular agent. Assume that all other agents bid truthfully and that the agent bids truthfully on positions $j+1,\dots,k$. Then the derivative in the bid on position $j$ of the agent's expected utility vanishes at the truthful bid, \ie
	\[
		\frac{d}{dx_j} u^*((x_1,\dots,x_j,v,\dots,v),v) \Bigr|_{x_j=v} = 0 .
	\]
\end{lemma}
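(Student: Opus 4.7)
The plan is to compute $u^*((x_1, \dots, x_j, v, \dots, v), v)$ explicitly and differentiate in $x_j$, exploiting the greedy allocation rule to express which position the agent wins. First I would write $u^* = \sum_s \Pr[A_s]\,(\beta_s v - b_s^*(\tilde x_s))$, where $\tilde x_s = x_s$ for $s \leq j$ and $\tilde x_s = v$ for $s > j$, and $A_s$ is the event that the agent is allocated position $s$. Under the assumption (to be verified separately) that each $b_s^*$ is strictly monotone, the greedy rule gives $A_s = \{v_{(t)} > \tilde x_t \text{ for } t < s,\text{ and } v_{(s)} < \tilde x_s\}$, where $v_{(t)}$ denotes the $t$-th order statistic of the other $n-1$ agents' valuations sampled i.i.d.\ from $F$.

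Next I would differentiate $u^*$ in $x_j$ and evaluate at $x_j = v$. This produces the payment-derivative term $-\Pr[A_j]\cdot(b_j^*)'(v)$ together with boundary terms from the probabilities $\Pr[A_s]$ for $s \geq j$. The key structural observation is that at $x_j = v$, the relevant boundary event is $v_{(j)} = v$, which under a continuous $F$ is almost surely incompatible with the requirement $v_{(j+1)} > v$ needed for any $A_s$ with $s > j+1$; hence $\frac{d}{dx_j}\Pr[A_s]\bigr|_{x_j=v} = 0$ for every such $s$. Only $\Pr[A_j]$ and $\Pr[A_{j+1}]$ contribute, and their derivatives are exact negatives of one another, because crossing the boundary $v_{(j)} = v$ trades the agent's win of position $j+1$ for a win of position $j$. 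The surviving boundary term becomes $f_{v_{(j)}}(v)\cdot\bigl[(\beta_j - \beta_{j+1}) v - b_j^*(v) + b_{j+1}^*(v)\bigr]$.

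Setting the derivative to zero reduces the lemma to verifying the identity
\[
P_j(v)\cdot(b_j^*)'(v) \;=\; f_{v_{(j)}}(v)\cdot\bigl[(\beta_j - \beta_{j+1})\,v - b_j^*(v) + b_{j+1}^*(v)\bigr],
\]
where $P_j(v) = \Pr[v_{(j-1)} > v > v_{(j)}]$ is the probability that the agent has rank $j$. Using the standard order-statistic ratio $f_{v_{(j)}}(v)/P_j(v) = (n-j)\,f(v)/F(v)$, this becomes a linear first-order ODE relating $b_j^*$ to $b_{j+1}^*$. I would verify the ODE by differentiating the defining relation $P_j(v)\cdot b_j^*(v) = \E[\mathrm{VCG}_j \cdot \mathbf{1}[\text{agent at position }j]]$, writing $\mathrm{VCG}_j = \sum_{s \geq j+1}(\beta_{s-1} - \beta_s)\,v_{(s-1)} + \beta_k\,v_{(k)}\,\mathbf{1}[n \geq k+1]$, and exploiting a telescoping cancellation with the analogous expression for $P_{j+1}(v)\,b_{j+1}^*(v)$. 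The hard part will be this final ODE verification: carefully tracking the contribution from differentiating $P_j(v)$, handling the conditional distribution of the lower order statistics given the rank of $v$, and confirming that the telescoping terms in the VCG decomposition produce exactly the required right-hand side.
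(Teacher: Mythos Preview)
Your strategy matches the paper's: decompose the expected utility by position, observe that positions $s<j$ contribute nothing, reduce the contribution of positions $j$ and $j+1$ to the differential equation $(b_j^*)'(v) = (n-j)\frac{f(v)}{F(v)}\bigl[(\beta_j-\beta_{j+1})v - b_j^*(v) + b_{j+1}^*(v)\bigr]$, and argue separately that positions $s>j+1$ contribute zero at $x_j=v$. The execution differs in two places. For $s>j+1$ the paper proves \lemref{lem:auxiliary-1} and \lemref{lem:auxiliary-2} by induction on the recursion~\eqref{eq:recursion} for $P_{s,m}(x)$; your order-statistic argument---that the boundary event $v_{(j)}=v$ forces $v_{(j+1)}\le v$ and is therefore incompatible with the constraint $v_{(j+1)}>v$ required by $A_s$---is more direct and arguably cleaner. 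For the ODE verification the paper expands both sides via the explicit integral formula for $b_j^*$ and matches coefficients of the functions $Z_{n-s+t}(v)$; your proposal to differentiate the identity $P_j(v)\,b_j^*(v)=\E[\mathrm{VCG}_j\cdot\mathbf{1}\{\text{rank }j\}]$ and telescope against the analogous $j{+}1$ expression is a legitimate alternative, though it is not carried out here and will still need care with the two boundary contributions at $v_{(j-1)}=v$ and $v_{(j)}=v$.

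One point requires attention. The lemma is stated for \emph{arbitrary} $x_1,\dots,x_{j-1}$, but you silently replace $\Pr[A_j]$ by $P_j(v)=\Pr[v_{(j-1)}>v>v_{(j)}]$ and $\partial_{x_j}\Pr[A_j]$ by the marginal density $f_{v_{(j)}}(v)$, which amounts to setting $x_1=\dots=x_{j-1}=v$. The conclusion is still correct because the $(x_1,\dots,x_{j-1})$-dependence of both $\Pr[A_j]\bigl|_{x_j=v}$ and $\partial_{x_j}\Pr[A_j]\bigl|_{x_j=v}$ is the \emph{same} multiplicative factor $1-\sum_{t<j}P_{t,j-1}(x)$ (this is exactly what the paper's recursion~\eqref{eq:recursion} makes explicit), so the ratio entering your ODE is unaffected; but you should state and justify this factorization rather than leave it implicit.
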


\begin{lemma}\label{lem:part-b}
	Fix a particular agent. Assume that all other agents bid truthfully and that the agent bids truthfully on positions $j+1,\dots,k$. Then, the derivative in the valuation of the derivative in the bid on position $j$ of the agent's expected utility is non-negative, \ie
	\[
		\frac{d}{dv}\frac{d}{dx_j} u^*((x_1,\dots,x_j,v,\dots,v),v) \geq 0 .
	\]
\end{lemma}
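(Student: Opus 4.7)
The plan is to combine the envelope theorem with the inductive hypothesis that, for each $s>j$, bidding $b^*_s(v)$ on position $s$ is optimal given any bids on earlier positions and truthful bids on later positions. Together with the structure of the greedy allocation, this should reduce the claim to a purely combinatorial inequality about how allocation probabilities shift when one raises the bid on a single position.

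Write $u^*(x,v)=\sum_{s=1}^{k} P_s(x)\bigl(\beta_s v - b^*_s(x_s)\bigr)$, where $P_s(x)$ denotes the probability that the agent wins position $s$ under the greedy allocation when it bids $b^*_s(x_s)$ on position $s$. In the expression $u^*((x_1,\dots,x_j,v,\dots,v),v)$ the parameter $v$ enters in two ways: as the true valuation, and as the ``reported value'' on positions $j+1,\dots,k$. The chain rule gives
\[
	\frac{d}{dv}\,u^*((x_1,\dots,x_j,v,\dots,v),v) = \frac{\partial u^*}{\partial v} + \sum_{s>j}\frac{\partial u^*}{\partial x_s},
\]
evaluated at $x_s=v$ for $s>j$. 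By the inductive hypothesis---which is precisely \lemref{lem:part-a} applied at each $s>j$---every $\partial u^*/\partial x_s$ summand vanishes. The direct partial in $v$ equals $\sum_s P_s(x)\beta_s$, since neither $P_s$ nor $b^*_s(x_s)$ depends on $v$. Differentiating the resulting identity in $x_j$ and invoking Clairaut's theorem yields
\[
	\frac{d}{dv}\frac{d}{dx_j}\,u^* = \sum_{s\geq j}\frac{\partial P_s}{\partial x_j}\bigl((x_1,\dots,x_j,v,\dots,v)\bigr)\,\beta_s,
\]
where only $s\geq j$ appear because under the greedy rule $P_s$ does not depend on $x_j$ for $s<j$.

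To finish I would argue non-negativity of this last sum using three structural facts about the greedy allocation: $\partial P_j/\partial x_j\geq 0$, $\partial P_s/\partial x_j\leq 0$ for $s>j$, and $\sum_{s\geq j}\partial P_s/\partial x_j\geq 0$ (the probability of being allocated \emph{some} position cannot decrease as the bid on position $j$ is raised). Setting $\lambda:=\sum_{s\geq j}\partial P_s/\partial x_j\geq 0$ so that $\partial P_j/\partial x_j=\lambda+\sum_{s>j}\bigl(-\partial P_s/\partial x_j\bigr)$, the sum rewrites as
\[
	\sum_{s\geq j}\frac{\partial P_s}{\partial x_j}\,\beta_s = \lambda\,\beta_j + \sum_{s>j}\left(-\frac{\partial P_s}{\partial x_j}\right)(\beta_j-\beta_s)\geq 0,
\]
since $\beta$ is non-increasing. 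The main obstacle I expect is a rigorous justification of the envelope step: it requires the inductive hypothesis to supply a first-order condition pointwise at every $(x_1,\dots,x_{j-1},x_j,v)$ at which we differentiate, and it requires the allocation probabilities $P_s$ to be smooth enough in $x_j$ and $v$ for Clairaut's theorem to apply. Both should follow from continuity of $F$.
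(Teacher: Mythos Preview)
Your argument is correct and lands on exactly the same expression the paper does, namely $\frac{d}{dx_j}\sum_{s\geq j}\beta_s P_{s,n-1}(x)$, but you reach it by a slightly different route. The paper writes the contribution of positions $s>j$ via the one-dimensional envelope integral~\eqref{eq:onedim}, differentiates under the integral in $x_j$, and then applies the fundamental theorem of calculus when taking $d/dv$. You instead take $d/dv$ first, invoke \lemref{lem:part-a} at each index $s>j$ as a first-order condition to kill the $\partial u^*/\partial x_s$ terms, and then swap the order of differentiation via Clairaut. These are two faces of the same envelope argument; your version has the advantage of making the dependence on \lemref{lem:part-a} explicit and of avoiding the integral bookkeeping, while the paper's version is more self-contained. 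For the final inequality the paper gives a direct sample-path argument---raising $x_j$ can only move the agent from a position $s>j$ (or no position) into position $j$---which is exactly what underlies the three structural facts you then combine algebraically.
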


\begin{proof}[Proof of \thmref{thm:expressive}]
	Fix a particular agent and assume that all other agents bid truthfully. 
	Suppose that we have established the claim for positions $j+1,\dots,k$, and that we want to establish it for position $j$. The claim trivially holds for $j=k$, so we know from the induction hypothesis that
	\[
		u^*(x_1,\dots,x_j,v,\dots,v) = \max \{x_{j+1},\dots,x_k\midd u^*(x_1,\dots,x_j,x_{j+1},\dots,x_k)\} .
	\]
	To show that the claim holds for position $j$, assume for contradiction that there exists $v'\in\reals$ such that
	\[
		u^*(x_1,\dots,x_{j-1},v',v,\dots,v) > u^*(x_1,\dots,x_{j-1},v,v,\dots,v) .
	\]
	First assume that $v'<v$. Then,
\begin{align*}
	u^*((x_1,\dots,x_{j-1},v,\dots,v),v) - u^*((x_1,\dots,x_{j-1},v',v,\dots,v),v) & = \\
	\int_{v'}^{v}\! \frac{d}{dx_j}\, u^*((x_1,\dots,x_{j-1},x_j,v,\dots,v),v)\, dx_j & \geq \\
	\int_{v'}^{v}\! \frac{d}{dy}\, u^*((x_1,\dots,x_{j-1},y,x_j,\dots,x_j),x_j) \Bigr|_{y=x_j} dx_j  & = 0 ,
\end{align*}
where the inequality and the last equality respectively hold by \lemref{lem:part-b} and \lemref{lem:part-a}. This is a contradiction.
It is important to note here that when all other agents bid according to $b^*$, it is without loss of generality to consider only bids $b_j^*(v)$ where $v$ is in the support of $F$, because any other bid will be dominated by a bid of this type.

If $v'>v$ we can proceed analogously to show that the deviation is not beneficial.
\end{proof}



\subsection{Truthful VCG Payments and Allocation Probabilities}\label{sec:bidding}

	We begin by formally defining the position-specific bidding functions $b^*_j$ and computing their derivative in the valuation. We then derive a recursive formulation of the allocation probabilities, which will be used in the proofs of \lemref{lem:part-a} and \lemref{lem:part-b}. Bid $b_j^*(v)$ equals the truthful VCG payment for position $j$ given valuation $v$ and conditioned on allocation of position $j$. This quantity is equal to the sum over the differences $\beta_s-\beta_{s+1}$ multiplied by the expected value of the $s+1$-highest valuation among all agents conditioned on $v$ being the $j$-highest valuation and assuming that valuations are drawn independently from distribution~$F$. Formulaically,
\begin{align*}
	b^*_j(v) &= \sum_{s=j}^{k} (\beta_{s} - \beta_{s+1}) \int_{0}^{v}\! \frac{(n-j)!}{(n-s-1)!(s-j)!}\left(\frac{F(u)}{F(v)}\right)^{n-s-1} \left(1-\frac{F(u)}{F(v)}\right)^{s-j} \frac{f(u)}{F(v)} u \,du
\end{align*}
Using that $(1-\frac{F(u)}{F(v)})^{s-j} = \sum_{t=0}^{s-j} (-1)^t (\frac{F(u)}{F(v)})^t$ and defining $Z_{n-s+t}(v) = (\frac{1}{F(v)})^{n-s+t} \int_{0}^{v}\! F(u)^{n-s+t}\, du$ we have that
\begin{align*}  
	b^*_j(v) &= \sum_{s=j}^{k} (\beta_{s} - \beta_{s+1}) \sum_{t=0}^{s-j} (-1)^t {s-j \choose t} \frac{(n-j)!}{(n-s-1)!(s-j)!} \frac{1}{(n-s+t)} \left(v - Z_{n-s+t}(v)\right).
\end{align*}
Using that $\frac{d}{dv} (v - Z_{n-s+t}(v)) = (n-s+t) \frac{f(v)}{F(v)} Z_{n-s+t}(v)$ we obtain
\begin{align}  \label{eq:derivative}
	\frac{d}{dv} p_j(v) &= \sum_{s=j}^{k} (\beta_{s} - \beta_{s+1}) \sum_{t=0}^{s-j} (-1)^t {s-j \choose t} \frac{(n-j)!}{(n-s-1)!(s-j)!} \frac{f(v)}{F(v)} Z_{n-s+t}(v).
\end{align}

	Denote by $P_{s,m}(x)$ the probability that an agent is assigned position $s$ against $m$ opposing agents if he reports a valuation vector $x\in\rgeqk$. Then $P_{s,m}(x)$ can be written recursively as
\begin{equation}  \label{eq:recursion}
	\begin{aligned} 
		P_{1,m}(x) &= F(x_1)^m, \quad\text{and} \\
		P_{s,m}(x) &= {m \choose  m-s+1} F(x_s)^{m-s+1} \left(1-\sum_{t=1}^{s-1}P_{t,s-1}(x)\right)
	\end{aligned}
\end{equation}
The intuition behind this formulation is that the agent is assigned position $s$ if $m-s+1$ of the opposing agents have valuations smaller than $x_s$ and the agent is not assigned one of the positions $1,..,s-1$ against the remaining $s-1$ agents. 
An important observation at this point is that $P_{s,m}(x)$ does not depend on $x_\ell$ for $\ell>s$. 


\subsection{Proof of \lemref{lem:part-a}}\label{sec:lemma1}


	To prove the lemma, we write the expected utility that agent $i$ can achieve with a report $x\in\rgeqk$ given value $v$ as a sum of the contributions $T_s(x,v)=P_{s,n-1}(x)(\beta_s v - b^*_s(x_s))$ of position $s$. We then group these contributions into those of positions $s<j$, those of positions $j$ and $j+1$, and those of positions $s>j+1$, and argue for each group that their derivative in $x_j$ vanishes at $x_j=v$.

	For the contribution $\sum_{s=1}^{j-1} T(x,v)$ of positions $s<j$ this is rather straightforward, as neither the allocation probability $P_{s,n-1}(x)$, nor the utility  $\beta_s v-b^*_s(x_s)$ subject to allocation, depends on $x_j$. Hence the derivative in $x_j$ is zero everywhere, and in particular at $x_j=v$.

	To prove the claim for $T_j(x,v) + T_{j+1}(x,v)$, we first apply the recursive formulation of the allocation probabilities to compute the derivatives in $x_j$ of $T_j(x,v)$ and $T_{j+1}(x,v)$. We then observe that the derivative of $T_j(x,v) + T_{j+1}(x,v)$ vanishes at $x_j=v$ if and only if a certain differential equation involving the bids $b^*_j(v)$ and $b^*_{j+1}(v)$ is satisfied. Finally, we use the formulas for the truthful VCG payments conditioned on allocation and their derivatives to show that this differential equation is satisfied.

\begin{lemma}\label{lem:jandjplus1}
	Fix a particular agent. Assume that all other agents bid truthfully and that the agent bids truthfully on positions $j+1,\dots,k$. Then,
\[
	\frac{d}{dx_j} \big(T_j(x,v)+T_{j+1}(x,v)\big)\Bigr|_{x_j=v} = 0 .
\]
\end{lemma}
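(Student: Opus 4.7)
The plan is to compute $\frac{d}{dx_j}(T_j(x,v) + T_{j+1}(x,v))$ at $x_j = v$, $x_{j+1} = \cdots = x_k = v$ by exploiting the recursion (\ref{eq:recursion}) to isolate the $x_j$-dependence of each factor, and then reduce the vanishing-derivative condition to a differential equation in $b^*_j$ and $b^*_{j+1}$ that I verify from the closed forms in \secref{sec:bidding}. The structural observation that makes this tractable is that at the evaluation point $T_{j+1}$ depends on $x_j$ only through $P_{j+1,n-1}(x)$, since $x_{j+1} = v$ is held fixed and hence $b^*_{j+1}(x_{j+1})$ is constant in $x_j$, whereas $T_j$ depends on $x_j$ through both $P_{j,n-1}(x)$ and $b^*_j(x_j)$.

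Unfolding the recursion and writing $Q := 1 - \sum_{t=1}^{j-1} P_{t,j-1}(x)$ (which is independent of $x_j$), one obtains $P_{j,n-1}(x) = \binom{n-1}{n-j} F(x_j)^{n-j} Q$ and $P_{j,j}(x) = j F(x_j) Q$; moreover, in the formula for $P_{j+1,n-1}(x)$ only the $t = j$ term of $\sum_{t=1}^{j} P_{t,j}(x)$ is sensitive to $x_j$, so $\frac{d P_{j+1,n-1}}{dx_j} = -\binom{n-1}{n-j-1} F(x_{j+1})^{n-j-1} j f(x_j) Q$. Collecting the three resulting contributions at $x_j = x_{j+1} = v$ and using the identity $\binom{n-1}{n-j}(n-j) = \binom{n-1}{n-j-1} j$ to factor out a common nonzero prefactor, the vanishing-derivative condition reduces to the first-order ODE
\[
	\frac{F(v)}{(n-j) f(v)} (b^*_j)'(v) = (\beta_j - \beta_{j+1})\, v + b^*_{j+1}(v) - b^*_j(v).
\]

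The main obstacle is then verifying this ODE, which I plan to do by expanding $b^*_j(v) + \frac{F(v)}{(n-j)f(v)}(b^*_j)'(v)$ using the series from \secref{sec:bidding} together with the derivative formula (\ref{eq:derivative}), and matching it term by term in the index $s$ against $(\beta_j - \beta_{j+1})v + b^*_{j+1}(v)$. The $s = j$ term collapses cleanly to $(\beta_j - \beta_{j+1})v$ because the $Z_{n-j}(v)$ contribution from $b^*_j(v)$ is cancelled by the one arising from $(b^*_j)'(v)$ after the $F(v)/((n-j)f(v))$ rescaling. For each $s \geq j+1$, the $t = s-j$ contribution vanishes identically because the combined coefficient carries a factor $1/(n-j) - 1/(n-s+t)$, while the remaining coefficients on each $Z_{n-s+t}(v)$ for $t \leq s-j-1$ reduce to those in the $s$-th summand of $b^*_{j+1}(v)$ via the combinatorial identity $\binom{s-j}{t}(s-j-t) = (s-j)\binom{s-j-1}{t}$, and the $v$-coefficients match after invoking $\sum_{t=0}^{m}(-1)^t \binom{m}{t}/(a+t) = m!(a-1)!/(a+m)!$ with $a = n-s$, which reduces both $v$-coefficients to $1$. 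This term-by-term matching establishes the ODE and completes the proof.
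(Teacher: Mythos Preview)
Your proposal is correct and follows essentially the same route as the paper: you unfold the recursion~\eqref{eq:recursion} to isolate the $x_j$-dependence, reduce the vanishing-derivative condition to the same first-order ODE relating $b^*_j$, $b^*_{j+1}$, and $(b^*_j)'$, and then verify that ODE term by term using the closed forms from \secref{sec:bidding} together with the combinatorial identity $\binom{s-j}{t}(s-j-t)=(s-j)\binom{s-j-1}{t}$. The only cosmetic difference is bookkeeping: the paper rearranges the ODE as $(b^*_j)'=(n-j)\tfrac{f}{F}\bigl[(\beta_jv-b^*_j)-(\beta_{j+1}v-b^*_{j+1})\bigr]$ and first shows that the $v$-parts of $b^*_j$ and $b^*_{j+1}$ equal $\beta_jv$ and $\beta_{j+1}v$ (your beta-function identity) before matching the $Z$-parts, whereas you carry $b^*_j$ to the other side and match $v$- and $Z$-coefficients of $b^*_j+\tfrac{F}{(n-j)f}(b^*_j)'$ against those of $(\beta_j-\beta_{j+1})v+b^*_{j+1}$ simultaneously.
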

\begin{proof}
	First consider the contribution $T_j(x,v) = P_{j,n-1}(x) (\beta_{j}v - b^*_j(x_j))$ of position $j$.  By applying \eqref{eq:recursion} to $P_{j,n-1}(x)$,
\begin{align*}
	T_j(x,v) = {n-1 \choose n-j} F(x_j)^{n-j} \left(1-\sum_{t=1}^{j-1} P_{t,j-1}(x)\right) (\beta_j v - b^*_j(x_j)) ,
\end{align*}
and thus
\begin{align*}
	\frac{d}{dx_j} T_j(x,v) = {n-1 \choose n-j} \Bigg(1-\sum_{t=1}^{j-1}
		&P_{t,j-1}(x)\Bigg) \bigg((n-j)F(x_j)^{n-j-1} f(x_j) \beta_j v-\\
		&(n-j)F(x_j)^{n-j-1} f(x_j) b^*_j(x_j)-F(x_j)^{n-j} \frac{d}{dx_j} b^*_j(x_j)
	\bigg) .
\end{align*}

	Now consider the contribution $T_{j+1}(x,v) = P_{j+1,n-1}(x)(\beta_{j+1}v-b^*_{j+1}(x_{j+1})$ of position $j+1$. By applying \eqref{eq:recursion} to $P_{j+1,n-1}(x)$,
\begin{align*}
	T_{j+1}(x,v) &= {n-1 \choose n-j-1} F(v)^{n-j-1} \left(1-\sum_{t=1}^{j} P_{t,j}(x)\right) (\beta_{j+1} v - b^*_{j+1}(v)) .
\end{align*}
By pulling $P_{j,j}(x)$ out of the sum and applying \eqref{eq:recursion} to it, we obtain
\begin{multline*}
	T_{j+1}(x,v) = {n-1 \choose n-j-1} F(v)^{n-j-1} \cdot \\ \Bigg(1-\sum_{t=1}^{j-1} P_{t,j}(x) -{j \choose 1} F(x_j) \Bigg(1-\sum_{t=1}^{j-1} P_{t,j-1}(x)\Bigg) \Bigg) (\beta_{j+1} v - b^*_{j+1}(v)) ,
\end{multline*}
and thus
\begin{multline*}
	\frac{d}{dx_j} T_{j+1}(x,v) = 
	- {n-1 \choose n-j-1} F(v)^{n-j-1} {j \choose 1} f(x_j) \left(1-
		\sum_{t=1}^{j-1} P_{t,j-1}(x)\right) (\beta_{j+1} v - b^*_{j+1}(v)).
\end{multline*}

	We conclude that the derivative in $x_j$ of the contribution $T_j(x,v) + T_{j+1}(x,v)$ from positions $j$ and $j+1$ vanishes at $x_j = v$ if and only if
\begin{multline*}
	{n-1 \choose n-j} \bigg((n-j)F(v)^{n-j-1} f(v) \beta_jv -(n-j)F(v)^{n-j-1} f(v) b^*_j(v)-F(v)^{n-j} \frac{d}{dx_j} b^*_j(x_j)\Bigr|_{x_j=v}\bigg)\\
	- {n-1 \choose n-j-1} F(v)^{n-j-1} {j \choose 1} f(v) (\beta_{j+1} v - b^*_{j+1}(v)) = 0.
\end{multline*}
Using ${n-1 \choose n-j-1} {j \choose 1} = {n-1 \choose n-j} (n-j)$ to simplify and rearranging leads to the following differential equation:
\begin{align*}
	\frac{d}{dx_j} b^*_j(x_j)\Bigr|_{x_j=v} = (n-j)\frac{f(v)}{F(v)}\bigg[(\beta_{j}v -b^*_j(v)) - (\beta_{j+1}v - b^*_{j+1}(v)) \bigg].
\end{align*}

	We first observe that the $v$ parts of $b^*_j(v)$ and $b^*_{j+1}(v)$ cancel $\beta_{j}v$ and $\beta_{j+1}v$. This is the case because for $z\in\{j,j+1\}$ the $v$ part of $b^*_z(v)$ is equal to
\begin{align*}
	\sum_{s=z}^{k} (\beta_{s} - \beta_{s+1}) \underbrace{\sum_{t=0}^{s-z} (-1)^t {s-z \choose t} \frac{(n-j)!}{(n-s-1)!(s-z)!} \frac{1}{(n-s+t)}}_{=1} v 
	=\sum_{s=z}^{k} (\beta_{s} - \beta_{s+1}) v
	= \beta_{z}v.
\end{align*}

	It remains to show that $(n-j)\frac{f(v)}{F(v)}$ times the $Z$ part of $b^*_{j+1}(v)$ minus the $Z$ part of $b^*_j(v)$ is equal to the derivative in $x_j$ of $b^*_j(x_j)$ at $x_j=v$. Formulaically, the former can be expressed as
\begin{align}  \label{eq:zparts}
	& (n-j)\frac{f(v)}{F(v)}\bigg[\sum_{s=j}^{k}(\beta_s - \beta_{s+1}) \sum_{t=0}^{s-j} (-1)^t {s-j \choose t} \frac{(n-j)!}{(n-s-1)!(s-j)!} \frac{1}{(n-s+t)} Z_{n-s+t}(v)- \\
		&\sum_{s=j+1}^{k} (\beta_s - \beta_{s+1}) \sum_{t=0}^{s-j-1} (-1)^t {s-j-1 \choose t} \frac{(n-j-1)!}{(n-s-1)!(s-j-1)!} \frac{1}{(n-s+t)} Z_{n-s+t}(v) \bigg]. \notag
\end{align}
We prove the identity by showing that for all $s$ and $t$, the terms in \eqref{eq:zparts} are identical to the corresponding terms in \eqref{eq:derivative}.

	For $s=j$, the only possible value for $t$ is $t=0$, so the term in~\eqref{eq:zparts} is
\begin{align*}
	&(n-j)\frac{f(v)}{F(v)} (\beta_s - \beta_{s+1}) (-1)^t {s-j \choose t} \frac{(n-j)!}{(n-s-1)!(s-j)!} \frac{1}{(n-s+t)} Z_{n-s+t}(v).
\end{align*}
It is easy to see that this is identical to the corresponding term in \eqref{eq:derivative}, which is
\begin{align*}
	&(\beta_s - \beta_{s+1}) (-1)^t {s-j \choose t} \frac{(n-j)!}{(n-s-1)!(s-j)!} \frac{f(v)}{F(v)} Z_{n-s+t}(v).
\end{align*}

	For $s>j$ and any $t$ in the correct range the term in~\eqref{eq:zparts} is
\begin{align*}
	(n-j)\frac{f(v)}{F(v)} \bigg[ 
		&(\beta_s - \beta_{s+1}) (-1)^t {s-j \choose t} \frac{(n-j)!}{(n-s-1)!(s-j)!} \frac{1}{(n-s+t)} Z_{n-s+t}(v)-\\
		&(\beta_s - \beta_{s+1}) (-1)^t {s-j-1 \choose t} \frac{(n-j-1)!}{(n-s-1)!(s-j-1)!} \frac{1}{(n-s+t)} Z_{n-s+t}(v) \bigg] ,
\end{align*}
which using ${s-j-1 \choose t} = {s-j \choose t} \frac{s-j-t}{s-j}$ can be rewritten as
\begin{align*}
	&(\beta_s - \beta_{s+1}) (-1)^t {s-j \choose t} \frac{(n-j)!}{(n-s-1)!(s-j)!} \frac{f(v)}{F(v)} \left[ \frac{n-j}{n-s+t} - \frac{s-j-t}{n-s+t} \right] Z_{n-s+t}(v).
\end{align*}
Since $\frac{n-j}{n-s+t}-\frac{s-j-t}{n-s+t} = 1$, we obtain the corresponding term in~\eqref{eq:derivative}, which is
\begin{align*}
&(\beta_s - \beta_{s+1}) (-1)^t {s-j \choose t} \frac{(n-j)!}{(n-s-1)!(s-j)!} \frac{f(v)}{F(v)} Z_{n-s+t}(v). \tag*{\raisebox{-1ex}{\qedhere}}
\end{align*}
\end{proof}

	Next we consider the contribution $\sum_{s=j+2}^{k}T_s(x,v)$ from positions $s > j+1$. 
\begin{lemma}\label{lem:jplus2andbelow}
	Fix a particular agent. Assume that all other agents bid truthfully and that the agent bids truthfully on positions $j+1,\dots,k$. Then,
	\[
		\frac{d}{dx_j} \biggl(\sum_{s=j+2}^{k}T_s(x,v)\biggr)\Bigr|_{x_j=v} = 0 .
	\]
\end{lemma}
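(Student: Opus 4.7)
My plan is to prove the stronger per-position statement that $\frac{d}{dx_j} T_s(x,v)\bigr|_{x_j=v} = 0$ for every $s \ge j+2$. Because the agent bids truthfully on each position $s \ge j+1$, the substitution $x_s = v$ makes the factor $\beta_s v - b^*_s(x_s) = \beta_s v - b^*_s(v)$ in $T_s(x,v) = P_{s,n-1}(x)\bigl(\beta_s v - b^*_s(x_s)\bigr)$ constant in $x_j$. The task therefore reduces to showing that
\[
	\frac{d}{dx_j} P_{s,n-1}(x)\Bigr|_{x_j=v,\, x_{j+1}=\cdots=x_k=v} = 0 \qquad \text{for every } s \ge j+2.
\]

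Applying the recursion~\eqref{eq:recursion} to $P_{s,n-1}$ with $x_s = v$, and using the paper's observation that $P_{t,s-1}(x)$ does not depend on $x_\ell$ for $\ell > t$, the displayed derivative equals $-\binom{n-1}{n-s} F(v)^{n-s}\, G_{s-1}$, where I define
\[
	G_m \;:=\; \sum_{t=j}^{m} \frac{d}{dx_j} P_{t,m}(x)\Bigr|_{x_j=v,\, x_{j+1}=\cdots=x_k=v} .
\]
So it suffices to prove $G_m = 0$ for every $m \ge j+1$, which I would establish by strong induction on $m$. For $t \ge j+1$ the same recursion together with the same independence observation yields $\frac{d}{dx_j} P_{t,m}(x) = -\binom{m}{m-t+1} F(v)^{m-t+1}\, G_{t-1}$ at the base point; for $t \ge j+2$ these terms vanish by the inductive hypothesis. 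For $t = j+1$ the term equals $-\binom{m}{m-j} F(v)^{m-j}\, G_j$ with $G_j = j f(v) C_{j-1}$, where $C_{j-1} := 1 - \sum_{u=1}^{j-1} P_{u,j-1}(x)$ is independent of $x_j$. Differentiating $P_{j,m} = \binom{m}{m-j+1} F(x_j)^{m-j+1}\, C_{j-1}$ directly yields the remaining $t = j$ term, $\binom{m}{m-j+1}(m-j+1) F(v)^{m-j} f(v)\, C_{j-1}$.

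The crux is the binomial identity $\binom{m}{m-j+1}(m-j+1) = \binom{m}{m-j}\, j$ (both sides equal $m!/((m-j)!(j-1)!)$), which makes the $t = j$ and $t = j+1$ contributions cancel exactly, closing the induction and giving $G_m = 0$ for every $m \ge j+1$. Summing the resulting per-$s$ identity over $s \in \{j+2,\dots,k\}$ then proves the lemma. The main obstacle is conceptual rather than computational: one must pair the $t=j$ and $t=j+1$ derivatives before inducting, since individually both are nonzero (the $t=j+1$ term is nonzero precisely because $G_j \ne 0$); only their sum vanishes, and once this pairing is identified the higher-$t$ contributions collapse automatically by induction.
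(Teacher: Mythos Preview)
Your proof is correct and follows essentially the same route as the paper: reduce to the allocation probabilities, expand via the recursion~\eqref{eq:recursion}, and use the binomial identity $\binom{m}{m-j+1}(m-j+1)=\binom{m}{m-j}\,j$ to cancel the $t=j$ and $t=j+1$ contributions while the $t\ge j+2$ contributions vanish inductively. The only cosmetic difference is that the paper separates these two pieces into two auxiliary lemmata (\lemref{lem:auxiliary-1} for the $j,j{+}1$ cancellation and \lemref{lem:auxiliary-2} for the inductive vanishing of $P_{\ell,m}$ with $\ell\ge j+2$), whereas you fold both into a single strong induction on the aggregate $G_m=\sum_{t=j}^{m}\frac{d}{dx_j}P_{t,m}(x)\bigr|_{x_j=v}$.
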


	Note that for position $s>j+1$ the contribution $T_s(x,v) = P_{s,n-1}(x,v) (\beta_s v - b^*_s(x_s))$ only depends on $x_j$ through the allocation probability $P_{s,n-1}(x,v)$. It therefore suffices to show that the derivative in $x_j$ of $P_{s,n-1}(x,v)$ vanishes at $x_j=v$.  We establish this claim by means of two auxiliary lemmata, which again exploit the recursive formulation of the allocation probabilities. The proofs are given in Appendices~\ref{app:auxiliary-1} and~\ref{app:auxiliary-2}.

\begin{lemma}\label{lem:auxiliary-1}
	Fix a particular agent. Assume that all other agents bid truthfully and that the agent bids truthfully on positions $j+1,\dots,k$. Then, for all $m\geq j+1$,
	\[
		\frac{d}{dx_j} \left(P_{j,m}(x) + P_{j+1,m}(x)\right)\Bigr|_{x_j=v} \,= 0 .
	\]
\end{lemma}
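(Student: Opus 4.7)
The plan is to exploit the recursive formulation \eqref{eq:recursion} to isolate the dependence of $P_{j,m}(x)$ and $P_{j+1,m}(x)$ on $x_j$, differentiate explicitly, and observe that the two contributions cancel thanks to a clean binomial identity.

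First I would observe that $P_{t,j-1}(x)$ depends only on $x_1,\ldots,x_t$ with $t \le j-1$, so the quantity $Q_j := 1-\sum_{t=1}^{j-1} P_{t,j-1}(x)$ is independent of $x_j$. The recursion then gives
\[
	P_{j,m}(x) = \binom{m}{m-j+1} F(x_j)^{m-j+1} Q_j ,
\]
so the dependence on $x_j$ is transparent. For $P_{j+1,m}(x)$, I first rewrite
\[
	P_{j+1,m}(x) = \binom{m}{m-j} F(x_{j+1})^{m-j}\biggl(1 - \sum_{t=1}^{j-1} P_{t,j}(x) - P_{j,j}(x)\biggr),
\]
noting that among the summands $P_{t,j}(x)$ for $t\le j$, only $P_{j,j}(x) = j\,F(x_j)\, Q_j$ depends on $x_j$ (by the same argument as above, applied with $t\le j-1$). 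Since we are evaluating at $x_j=v$ and, by the truthful-bidding assumption on positions $j+1,\ldots,k$, also at $x_{j+1}=v$, I would then differentiate in $x_j$ and evaluate:
\begin{align*}
	\frac{d}{dx_j}P_{j,m}(x)\Big|_{x_j=v} &= \binom{m}{m-j+1}(m-j+1)\,F(v)^{m-j} f(v)\, Q_j ,\\
	\frac{d}{dx_j}P_{j+1,m}(x)\Big|_{x_j=v} &= -\binom{m}{m-j}\, j\,F(v)^{m-j} f(v)\, Q_j .
\end{align*}

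Adding the two derivatives factors out $F(v)^{m-j} f(v)\, Q_j$, leaving the coefficient
\[
	\binom{m}{m-j+1}(m-j+1) - \binom{m}{m-j}\, j.
\]
Both terms equal $\frac{m!}{(j-1)!(m-j)!}$, so the bracket vanishes and the sum of derivatives is zero, as desired. I do not anticipate a real obstacle here: once the recursion is unwound and the fact that $Q_j$ is independent of $x_j$ is recognized, the rest is a short binomial identity. The only mildly delicate point is verifying that no other summand hidden inside $P_{t,j}(x)$ for $t<j$ secretly depends on $x_j$, which is handled by the earlier observation that $P_{t,m'}(x)$ involves only coordinates $x_1,\ldots,x_t$.
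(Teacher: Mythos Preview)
Your proposal is correct and follows essentially the same route as the paper: apply the recursion \eqref{eq:recursion} to $P_{j,m}$ and $P_{j+1,m}$, isolate the $x_j$-dependence (noting that $P_{t,j-1}(x)$ and $P_{t,j}(x)$ for $t<j$ do not involve $x_j$), differentiate, and finish with the identity $\binom{m}{m-j+1}(m-j+1)=\binom{m}{m-j}\,j$. Your introduction of $Q_j$ is a convenient shorthand, but the argument is the paper's.
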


\begin{lemma}\label{lem:auxiliary-2}
	Fix a particular agent. Assume that all other agents bid truthfully and that the agent bids truthfully on positions $j+1,\dots,k$. Then, for all $m$ and $\ell$ such that $m \geq\ell\geq j+2$, 
	\[
		\frac{d}{dx_j} P_{\ell,m}(x) \Bigr|_{x_j=v} = 0 .
	\]
\end{lemma}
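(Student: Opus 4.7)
My plan is to prove the lemma by induction on $\ell$, starting from the base case $\ell = j+2$ and increasing, treating $m$ as an arbitrary parameter satisfying $m \geq \ell$ at each step. The argument will mostly amount to unfolding the recursion in~\eqref{eq:recursion} once and classifying the resulting terms according to which ones depend on $x_j$.

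Applying~\eqref{eq:recursion} gives
\[
P_{\ell,m}(x) = \binom{m}{m-\ell+1} F(x_\ell)^{m-\ell+1} \left(1 - \sum_{t=1}^{\ell-1} P_{t,\ell-1}(x)\right).
\]
Since $\ell \geq j+2 > j$, the observation following~\eqref{eq:recursion} ensures that the prefactor $\binom{m}{m-\ell+1} F(x_\ell)^{m-\ell+1}$ has no dependence on $x_j$, so only the sum in parentheses contributes to $\frac{d}{dx_j} P_{\ell,m}(x)$.

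I would then split the index $t$ into three ranges and show each contribution vanishes at $x_j = v$. For $1 \leq t \leq j-1$, the same observation implies that $P_{t,\ell-1}(x)$ depends only on $x_1,\dots,x_t$ and hence is independent of $x_j$, so its derivative is identically zero. For $t \in \{j, j+1\}$, I would apply Lemma~\ref{lem:auxiliary-1} with its parameter $m$ replaced by $\ell - 1$, which is admissible because $\ell - 1 \geq j+1$; this yields $\frac{d}{dx_j}\bigl(P_{j,\ell-1}(x) + P_{j+1,\ell-1}(x)\bigr)\bigr|_{x_j=v} = 0$. For $j+2 \leq t \leq \ell - 1$, I would invoke the induction hypothesis with the pair $(\ell',m') = (t, \ell-1)$; the constraints $t < \ell$, $t \geq j+2$, and $\ell - 1 \geq t$ are all satisfied. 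Summing the three cases produces the desired cancellation.

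The base case $\ell = j+2$ requires no separate argument, since then the third range $j+2 \leq t \leq j+1$ is empty and only the first two cases contribute. I do not expect any real obstacle: Lemma~\ref{lem:auxiliary-1} carries the essential content, and the present lemma is a routine propagation of that cancellation through the recursion. The only mildly delicate point is keeping track of the correctly reduced parameter $m' = \ell - 1$ when applying both Lemma~\ref{lem:auxiliary-1} and the induction hypothesis.
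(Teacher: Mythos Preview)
Your argument is correct and essentially identical to the paper's: both unfold the recursion~\eqref{eq:recursion} once, split the resulting sum into the ranges $t\le j-1$, $t\in\{j,j+1\}$, and $t\ge j+2$, and dispatch these via independence from $x_j$, \lemref{lem:auxiliary-1}, and the induction hypothesis respectively. The only cosmetic difference is that you induct on $\ell$ while the paper inducts on $m$; since unfolding $P_{\ell,m}$ produces terms $P_{t,\ell-1}$ with both $t<\ell$ and $\ell-1<m$, either choice works and the computations coincide.
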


\begin{proof}[Proof of \lemref{lem:jplus2andbelow}]
	For position $s>j+1$ we first apply \eqref{eq:recursion} to $P_{s,n-1}(x)$ to obtain
\begin{align*}
	T_s(x,v) = {n-1 \choose  n-s} F(x_s)^{n-s} \left(1-\sum_{t=1}^{s-1}P_{t,s-1}(x)\right) (\beta_s v - b^*_s(x_s))
\end{align*}
and then split $\sum_{t=1}^{s-1}P_{t,s-1}(x)$ into two parts to obtain
\begin{align*}
	T_s(x,v) = {n-1 \choose  n-s} F(x_s)^{n-s} \Bigg(1-\sum_{t=1}^{j-1}P_{t,s-1}(x)- \sum_{t=j}^{s-1}P_{t,s-1}(x)\Bigg) (\beta_s v - b^*_s(x_s)) .
\end{align*}

	The derivative is thus
\begin{align*}
\frac{d}{dx_j} T_s(x,v) = {n-1 \choose  n-s} F(x_s)^{n-s} \left(-\frac{d}{dx_j}\sum_{t=j}^{s-1}P_{t,s-1}(x)\right) (\beta_s v - b^*_s(x_s)) ,
\end{align*}
and we use \lemref{lem:auxiliary-1} and \lemref{lem:auxiliary-2} to conclude that it vanishes at $x_j=v$.
\end{proof}


\subsection{Proof of \lemref{lem:part-b}} \label{sec:lemma2}

	We now turn to \lemref{lem:part-b}, and begin by recalling the results for the one-dimensional case. In this case the expected utility for report $x$ given value $v$ is equal to
\[
	\sum_{s=1}^{k} \beta_s P_{s,n-1}(x) (v-x) + \sum_{s=1}^{k} \beta_s \int_{0}^{x}\! P_{s,n-1}(t) \,dt ,
\]
which for truthful report $x=v$ simplifies to 
\begin{equation} \label{eq:onedim}
	\sum_{s=1}^{k} \beta_s \int_{0}^{v}\! P_{s,n-1}(t)\, dt .
\end{equation}
We will use this formula below to express the expected utility from positions $j+1,\dots,k$ for which both agent $i$ and the other agents report their valuations truthfully.

	To compute the derivative in $x_j$ of the expected utility we first observe that the contribution $T_s(x,v)$ is independent from $x_j$ for $s<j$, and thus 
\begin{align*}
	\frac{d}{dx_j} \bigg(\sum_{s=1}^{k} T_s(x,v)\bigg) = \frac{d}{dx_j} \bigg(\sum_{s=j}^{k} T_s(x,v)\bigg) = \frac{d}{dx_j} \bigg(T_j(x,v) + \sum_{s=j+1}^{k} T_s(x,v)\bigg).
\end{align*}
For the contribution $T_j(x,v)$ from position $j$,
\begin{align*}
	\frac{d}{dx_j} T_j(x,v) 
		&= \frac{d}{dx_j} \bigg(P_{j,n-1}(x)(\beta_j v - b^*_j(x_j))\bigg)\\
		&= \beta_j v \frac{d}{dx_j}P_{j,n-1}(x) - b^*_j(x_j) \frac{d}{dx_j}P_{j,n-1}(x) - P_{j,n-1}(x) \frac{d}{dx_j} b^*_j(x_j).
\end{align*}
For the contributions $T_s(x,v)$ from positions $s>j$ we use~\eqref{eq:onedim} to obtain
\begin{align*}
	\frac{d}{dx_j} \sum_{s=j+1}^{k} T_{s}(x,v) 
		&= \frac{d}{dx_j} \left(\sum_{s=j+1}^{k} \beta_s \int_{0}^{v} P_{s,n-1}(x_1,\dots,x_j,t,\dots,t) dt\right)\\
		&= \sum_{s=j+1}^{k} \beta_s \int_{0}^{v} \frac{d}{dx_j} P_{s,n-1}(x_1,\dots,x_j,t,\dots,t) dt.
\end{align*}
Taking the derivative in $v$ yields
\begin{align*}
	\frac{d}{dv} \left(\frac{d}{dx_j} \sum_{s=1}^{k} T_s(x,v) \right) 
		= \beta_j \frac{d}{dx_j} P_{j,n-1}(x) + \sum_{s=j+1}^{k} \beta_s \frac{d}{dx_j} P_{s,n-1}(x)
		= \frac{d}{dx_j} \sum_{s=j}^{k} \beta_s P_{s,n-1}(x).
\end{align*}

	The final step now is to argue that this expression is non-negative. That the beta fraction won increases in the report $x_j$ on position $j$ holding everything else fixed follows by an ex-post argument. If the agent was allocated a position $s<j$ then changing his reported valuation $x_j$ for position $j$ has no effect and he will still be allocated position $s$. If the agent was allocated position $s=j$, he will still be allocated this position for higher $x_j$. If the agent was allocated a position $s>j$ or no position at all, then by increasing the reported valuation $x_j$ for position $j$ he will either be allocated the same position as before or position $j$, which means that the beta fraction won will increase weakly.


\section{Conclusion and Future Work}

	In this paper we analyzed position auctions through the lens of robustness. We asked whether there exists a single mechanism that works well under complete {\em and} incomplete information settings. Specifically, we were looking to identify a mechanism that achieves the truthful VCG revenue in every efficient equilibrium.  By recalling results from prior work we were able to exclude both simplified and expressive variants of the VCG and the GSP mechanism as well as simplified variants of the GFP mechanism. We then showed that an expressive GFP mechanism indeed achieves the desired property.

	Our work has a clear message: If the goal is robustness against uncertainty about the information agents have about one another, then expressiveness beyond the type space is both necessary and sufficient. It also provides a nice counterpoint to recent work on position auctions which has highlighted the benefits of simplicity.

	An interesting question for future work is whether the message that expressiveness beyond type space is required for robustness extends to other problems. This is particularly true for the combinatorial auction problem, where simplified designs have recently received a lot of attention~\citep{CKS08a,BhRo11a,FFGL13}. 


\bibliographystyle{abbrvnat}
\bibliography{abb,gfp}


\appendix


\section{Proof of \proref{pro:complete-information}}\label{app:complete-information}

\paragraph{Proof of Part 1}

	Assume that the agents are ordered by decreasing value, \ie that $v_1 \ge v_2 \ge \dots \ge v_n$. Then in an efficient assignment agent $i$ is assigned position $i$, for $1 \le i \le k$. Denote by $u_i$ the truthful VCG utility for agent $1 \le i \le n$ and denote by $p_i$ the truthful VCG payment for position~$1 \le i \le k$. Then $u_i = \beta_i v_i - p_i$ for $1 \le i \le k$ and $u_i = 0$ for $i > k.$ 
We claim that the bid profile $b\in(\rgeqk)^n$ with
\[
	b_{i,j} = \max(\beta_j v_i - u_i, 0)
\]
for $i=1,\dots,n$ and $j=1,\dots,k$ is an equilibrium of \gfp that is efficient and yields the truthful VCG payments.

	With this bid profile, an efficient allocation assigns position $i$ to agent $i$ at price $p_i$. For the greedy allocation rule, this outcome can be obtained by letting agent $i$ point to position $i$ and breaking ties in favor of the agent that points to a given position. To see that $b$ is an equilibrium first observe that agent $i$ cannot lower his bid for position $i$ without being assigned a position other than $i$. For contradiction it thus suffices to assume that agent~$i$ has a beneficial deviation to a position $j \neq i$, such that
\[
	\beta_i v_i - p_i < \beta_j v_i - p_j - \epsilon,
\]
for every $\epsilon>0$. Here we use that agent $i$ can bid $p_j+\epsilon$ on positions $j$ and above to win one of these positions, and that he values each of them at least as highly as position $j$. The left-hand side of this inequality equals the utility of agent $i$ in the truthful equilibrium of the VCG auction, whereas the right-hand side equals the utility agent $i$ would get if he was instead assigned position $j$ at price $p_j+\epsilon$. The inequality contradicts the fact that the truthful VCG equilibrium is envy-free. 

\paragraph{Proof of Part 2}

	Consider a Nash equilibrium $b$ and assume for contradiction that it leads to an inefficient assignment. Then there exist agents $i,j$ with $v_i>v_j$ such that agent $i$ is assigned position $s$ and agent $j$ is assigned position $t<s$.

	First assume that agent $i$ bids $b_{j,t}+\epsilon$ on positions $t$ and above, which means that he wins one of these positions. Since $b$ is an equilibrium this deviation is not beneficial, \ie for every $\epsilon>0$,
\begin{equation}  \label{eq:up}
	\beta_s v_i - b_{i,s} \ge \beta_t v_i - b_{j,t} - \epsilon .
\end{equation}

	Now consider the situation where agent $j$ bids according to bid vector $b'_j$ with
\[
	b'_{j,\ell} = \begin{cases}
		b_{i,s} + \epsilon & \text{if $1\leq\ell\leq s$} \\
		0 & \text{otherwise}
	\end{cases} 
\]
for some $\epsilon>0$. We claim that with these bids agent $j$ will either be assigned a position above $s$, or will compete for position $s$ with bids that are $b_{i,s}$ or lower and will therefore be assigned position $s$. For the latter observe that agents other than $j$ who are assigned a position above $s$ when agent $j$ bids according to $b_j$ can only be assigned a higher position when agent $j$ bids according to $b'_j$. This suffices because agents other than $j$ who were assigned position $s$ or below bid at most $b_{i,s}$ on position $s$.

	Since $b$ is an equilibrium, agent $j$ does not benefit from bidding according to $b'_j$, and thus for every $\epsilon>0$, 
\begin{equation}  \label{eq:down}
	\beta_t v_j - b_{j,t} \ge \beta_s v_j - b_{i,s} - \epsilon.
\end{equation}
By adding~\eqref{eq:up} and~\eqref{eq:down} and rearranging,
\[
	\beta_s v_i + \beta_t v_j \geq \beta_s v_j + \beta_t v_i - 2 \epsilon
\]
and thus
\[
	v_j \geq v_i - \frac{2\epsilon}{\beta_t-\beta_s}
\]
for every $\epsilon>0$. This contradicts the assumption that $v_i>v_j$.

\paragraph{Proof of Part 3}

	Consider a Nash equilibrium $b=(b_1,\dots,b_n)$ and assume without loss of generality that it leads to an assignment where agent $i$ is assigned position $i$ for $i=1,\dots,k$. Further assume that the assignment is efficient, \ie that $v_1\geq v_2\geq\dots\geq v_k$. For $1\le i\le k$, agent $i+1$ does not benefit from bidding $b_{i,i}+\epsilon=p_i+\epsilon$ on position $i$ and above, so
\[
	\beta_{i+1} v_{i+1} - p_{i+1} \geq \beta_{i} v_{i+1} - p_{i} -\epsilon
\]
for every $\epsilon>0$. Thus, for every $\epsilon>0$,
\begin{align*}
	p_k &\geq \beta_k v_{k+1} - \epsilon \quad\text{and}\\
	p_i &\geq (\beta_i - \beta_{i+1}) v_{i+1} + p_{i+1} - \epsilon \quad\text{for $1\leq i<k$},
\end{align*}
which proves the claim.

\section{Proof of \lemref{lem:auxiliary-1}}\label{app:auxiliary-1}

	First consider the allocation probability $P_{j,m}(x)$ for position $j$. Applying~\eqref{eq:recursion} to $P_{j,m}(x)$ yields
\begin{align*}
	P_{j,m}(x) = {m \choose  m-j+1} F(x_j)^{m-j+1} \left(1-\sum_{t=1}^{j-1}P_{t,j-1}(x)\right) ,
\end{align*}
and thus
\begin{align*}
\frac{d}{dx_j} P_{j,m}(x) = {m \choose  m-j+1} (m-j+1) F(x_j)^{m-j+1} (m-j) f(x_j) \left(1-\sum_{t=1}^{j-1}P_{t,j-1}(x)\right) .
\end{align*}

	Now consider the allocation probability $P_{j+1,m}(x)$ of position $j+1$. Applying~\eqref{eq:recursion} to $P_{j+1,m}(x)$ yields
\begin{align*}
P_{j+1,m}(x) = {m \choose  m-j} F(v)^{m-j} \left(1-\sum_{t=1}^{j}P_{t,j}(x)\right).
\end{align*}
Pulling $P_{j,j}(x)$ out of the sum and applying~\eqref{eq:recursion} to it yields
\begin{align*}
	P_{j+1,m}(x) &= {m \choose  m-j} F(v)^{m-j} \Bigg(1-
		\sum_{t=1}^{j-1} P_{t,j}(x) - {j \choose 1} F(x_j) \left(1 - \sum_{t=1}^{j-1} P_{t,j-1}(x)\right)\Bigg) ,
\end{align*}
and thus
\begin{align*}
\frac{d}{dx_j} P_{j+1,m}(x) = {m \choose  m-j} F(v)^{m-j} \left(-{j \choose 1} f(x_j) \left(1 - \sum_{t=1}^{j-1} P_{t,j-1}(x)\right)\right) .
\end{align*}

	We conclude that the derivative in $x_j$ of $P_{j,m}(x)+P_{j+1,m}(x)$ vanishes at $x_j=v$ if and only if
\begin{align*}
{m \choose  m-j+1} (m-j+1)  F(v)^{m-j} f(v) - {m \choose m-j} F(v)^{m-j} {j \choose 1} f(v) = 0.
\end{align*}
Since ${m \choose m-j}{j \choose 1}={m \choose  m-j+1}(m-j+1)$, this is indeed the case.


\section{Proof of \lemref{lem:auxiliary-2}}\label{app:auxiliary-2}

	We prove the claim by induction over $m$, starting with $m=j+2$. In this case the only possible value of $\ell$ is $\ell=j+2$, so it suffices to show that $\frac{d}{dx_j} P_{j+2,j+2}(x)\mid_{x_j=v} = 0$. Applying \eqref{eq:recursion} to $P_{j+2,j+2}(x)$ shows that
\begin{align*}
	P_{j+2,j+2}(x) = {j+2 \choose 1} F(v) \left(1-\sum_{t=1}^{j+1}P_{t,j+1}(x)\right) .
\end{align*}
By pulling $P_{j,j+1}(x)$ and $P_{j+1,j+1}(x)$ out of the sum this can be rewritten as
\begin{align*}
	P_{j+2,j+2}(x) &= {j+2 \choose  1} F(v) \Bigg(1-\sum_{t=1}^{j-1}P_{t,j+1}(x) -\bigg(P_{j,j+1}(x)+P_{j+1,j+1}(x)\bigg)\Bigg) ,
\end{align*}
and thus
\begin{align*}
	\frac{d}{dx_j} P_{j+2,j+2}(x) = {j+2 \choose  1} F(v) \frac{d}{dx_j} \bigg(-(P_{j,j+1}(x)+P_{j+1,j+1}(x)\bigg) .
\end{align*}
Using \lemref{lem:auxiliary-1} we conclude that the derivative vanishes at $x_j=v$.

	For the inductive step assume that the claim is true for all $m'<m$. We have to show that for any $\ell$ with $m\ge l\ge j+2$ it holds that $\frac{d}{dx_j} P_{\ell,m}(x)\mid_{x_j=v} = 0$. Applying \eqref{eq:recursion} to $P_{\ell,m}(x)$ yields
\begin{align*}
	P_{\ell,m}(x) &= {m \choose  m-\ell+1} F(v)^{m-\ell+1} \left(1-\sum_{t=1}^{\ell-1}P_{t,\ell-1}(x)\right) .
\end{align*}
By splitting $\sum_{t=1}^{\ell-1}P_{t,\ell-1}(x)$ into three parts we obtain
\begin{align*}
	P_{\ell,m}(x) = {m \choose  m-\ell+1} F(v)^{m-\ell+1} \Bigg(1- \sum_{t=1}^{j-1}P_{t,\ell-1}(x)- \sum_{t=j}^{j+1}P_{t,\ell-1}(x) -\sum_{t=j+2}^{\ell-1}P_{t,\ell-1}(x)\Bigg) ,
\end{align*}
and thus
\begin{align*}
	\frac{d}{dx_j} P_{\ell,m}(x) = {m \choose  m-\ell+1} F(v)^{m-\ell+1} \Bigg(-\frac{d}{dx_j} \sum_{t=j}^{j+1}P_{t,\ell-1}(x) - \frac{d}{dx_j} \sum_{t=j+2}^{\ell-1}P_{t,\ell-1}(x)\Bigg) .
\end{align*}
Using \lemref{lem:auxiliary-1} and the induction hypothesis we conclude that the derivative again vanishes at $x_j=v$.

\end{document}